


\documentclass[final,3p,times,twocolumn,authoryear]{elsarticle}


\usepackage{lipsum}
\usepackage{amsmath,amssymb,amsfonts,amsthm}
\usepackage{algorithmic}
\usepackage{graphicx}   

\newtheorem{theorem}{Theorem}

\theoremstyle{remark}
\newtheorem{remark}{Remark}

\newtheorem{definition}{Definition}



\journal{}

\begin{document}

\begin{frontmatter}



\title{A Time-Barrier Lyapunov Condition for Predefined-Time Stability}


\author[first]{Özhan Bingöl}
\affiliation[first]{organization={The Department of Electrical and Electronic Engineering, Gumushane University},
            addressline={Baglarbasi Mah.}, 
            city={Gumushane},
            postcode={29100}, 
            country={Turkey}, \ead{ozhan.bingol@gumushane.edu.tr}}

\begin{abstract}
Predefined-time stability enables convergence within a user-specified time independent of initial conditions. Existing results are predominantly based on autonomous Lyapunov inequalities, where the predefined-time is realized through integral bounds on state-dependent decay and therefore acts as an upper bound rather than a structurally enforced deadline. This paper introduces a time-barrier predefined-time stability concept in which convergence is enforced through a nonautonomous Lyapunov mechanism that intrinsically restricts the remaining available time. A sufficient Lyapunov-based condition is established, guaranteeing convergence before the predefined deadline via divergence of a time-dependent barrier. It is further shown that this mechanism cannot be reproduced by classical autonomous predefined-time stability formulations, thereby constituting a distinct stability notion. The proposed approach provides a concise and transparent means of enforcing hard convergence deadlines in nonlinear systems.
\end{abstract}



\begin{keyword}
Predefined-time stability \sep nonautonomous systems \sep Lyapunov methods



\end{keyword}

\end{frontmatter}




\section{Introduction}
\label{sec1}

The explicit enforcement of convergence time is a central problem in nonlinear stability theory, particularly for systems operating under strict temporal or safety constraints. Classical Lyapunov theory guarantees asymptotic stability but provides no information on transient duration. Finite-time \cite{bhat2000finite} and fixed-time \cite{polyakov2011nonlinear} stability address this limitation by ensuring convergence within a finite interval, with fixed-time formulations further removing dependence on initial conditions. However, in fixed-time systems the convergence bound is implicit and determined by system parameters, preventing the direct assignment of an exact settling time.

Predefined-time stability was introduced to overcome this limitation by allowing the convergence time to be predefined explicitly and independently of initial conditions \cite{sanchez2015predefined,sanchez2018class,jimenez2020lyapunov}. Most existing predefined-time stability results rely on autonomous Lyapunov dissipation inequalities of the form $\dot V \le -\Phi(V)$, where the predefined-time enters through integral bounds on state decay. Although mathematically rigorous, such constructions realize the predefined time as an upper bound derived from gain- and exponent-dependent conditions. As a consequence, the convergence deadline is not enforced structurally but emerges indirectly through state-dependent decay shaping.

More recent arbitrary-time \cite{pal2020design} and prescribed-time \cite{song2017time} approaches incorporate time explicitly into the Lyapunov dynamics, enabling exact convergence at a user-specified instant. These methods typically rely on nonautonomous transformations, exponential time scaling, or trajectory-matching conditions. While effective, they enforce convergence by prescribing the temporal evolution of the state, which may obscure the underlying stability mechanism and require additional structural assumptions.

These observations raise a fundamental question: can predefined-time convergence be enforced intrinsically as a hard temporal constraint, rather than being obtained through accelerated state decay or trajectory matching? In particular, is it possible to guarantee convergence before a predefined deadline by restricting the remaining available time itself, independently of the state magnitude?

This paper addresses this question by introducing a time-barrier predefined-time stability concept based on a nonautonomous Lyapunov mechanism. Instead of shaping state-dependent decay rates, the proposed approach embeds a time-dependent barrier into the Lyapunov dissipation inequality, whose divergence as the predefined-time is approached prevents solution trajectories from persisting away from the equilibrium. As a result, convergence is enforced through intrinsic temporal infeasibility rather than aggressive state acceleration.

A sufficient Lyapunov-based condition is established that guarantees convergence before a user-specified time via divergence of the time barrier. It is further shown that this mechanism cannot be reproduced by classical autonomous predefined-time stability inequalities, thereby constituting a distinct stability notion rather than a reformulation of existing results.

\section{Problem Setting and Motivation}

Consider the nonlinear time-varying system
\begin{equation}
\dot{x}(t) = f(x(t),t), \qquad x(0)=x_0,
\label{eq1}
\end{equation}

where $x(t)\in\mathbb{R}^n$ and $f:\mathbb{R}^n\times\mathbb{R}_{\ge 0}\to\mathbb{R}^n$ is locally Lipschitz in $x$ and piecewise continuous in $t$. The origin is assumed to be an equilibrium point, i.e., $f(0,t)=0$ for all $t\ge 0$. Solutions are understood in the classical sense when $f$ is smooth and in the Filippov sense otherwise. Predefined-time stability is typically established using autonomous Lyapunov dissipation inequalities of the form \cite{sanchez2015predefined}

\begin{equation}
\dot V(x) \le -\Phi(V(x)),
\label{eq2}
\end{equation}

where $V:\mathbb{R}^n\to\mathbb{R}_{\ge 0}$ is positive definite and $\Phi(\cdot)$ is selected to satisfy an integral condition

\begin{equation}
\int_0^{V_0} \frac{dV}{\Phi(V)} \le T_c,
\label{eq3}
\end{equation}

uniformly for all admissible initial values $V_0$. In this construction, the predefined-time $T_c$ arises as an upper bound determined by the shape of $\Phi(\cdot)$ and the associated gain and exponent selections.

Although such autonomous formulations provide explicit control over convergence time bounds, the deadline itself is not enforced structurally. Instead, convergence before $T_c$ is achieved through sufficiently aggressive state-dependent decay, and the predefined- time emerges indirectly from integral boundedness requirements. As a result, the convergence mechanism remains entirely state-driven, and the settling time is realized as a conservative upper bound rather than as an intrinsic temporal constraint.

Nonautonomous approaches such as arbitrary-time \cite{pal2020design} and prescribed-time \cite{song2017time} stability incorporate time explicitly into the Lyapunov dynamics to achieve exact convergence at a user-defined instant. These methods typically enforce convergence by matching a predefined decay trajectory or applying time-scaling transformations. While effective, they rely on regulating the state evolution to satisfy a predefined temporal profile rather than on restricting temporal feasibility itself.

Motivated by these limitations, this paper adopts a different perspective on predefined-time convergence. Instead of accelerating state decay or prescribing a reference trajectory, convergence is enforced by progressively restricting the remaining available time through a time-dependent barrier. In such a setting, solution trajectories are prevented from persisting away from the equilibrium as the predefined deadline is approached, regardless of their initial condition or decay rate. This principle forms the basis of the time-barrier predefined-time stability notion introduced in the next section.

\section{Time-Barrier Predefined-Time Stability}

This section introduces the proposed time-barrier predefined-time stability notion and establishes a sufficient Lyapunov-based condition guaranteeing convergence before a predefined deadline. Unlike classical autonomous predefined-time formulations, the convergence mechanism is enforced through a nonautonomous time-dependent barrier that intrinsically restricts the remaining available time.

\begin{definition}[Time-Barrier Predefined-Time Stability]
\label{def1}

Consider system \eqref{eq1} and a predefined-time $T_c>0$. The equilibrium $x=0$ is said to be \emph{time-barrier predefined-time stable} if, for every initial condition $x_0$, the corresponding solution $x(t;x_0)$ exists on $[0,T_c)$ and satisfies

\begin{equation}
\lim_{t \to T_c^-} x(t;x_0) = 0.
\label{eq4}
\end{equation}

\end{definition}

Definition \ref{def1} characterizes convergence through a hard temporal boundary: solution trajectories are not permitted to remain away from the equilibrium as the predefined deadline is approached. In contrast to classical predefined-time stability, convergence is not quantified via integral bounds on state decay but is instead enforced by the behavior near the terminal time.

\begin{theorem}[Time-Barrier Predefined-Time Stability Condition]
\label{thm1}

Let $V:\mathbb{R}^n\to\mathbb{R}_{\ge 0}$ be a positive definite and locally Lipschitz function. Assume that for every initial condition $x_0$, a solution of system \eqref{eq1} exists on $[0,T_c)$. Suppose that for all $x\neq 0$ and all $t\in[0,T_c)$,

\begin{equation}
\dot V(x(t),t) \le -\beta \frac{V(x(t),t)}{T_c-t} - q V(x(t),t)^{\alpha},
\label{eq5}
\end{equation}

where $q>0$, $\alpha\in(0,1)$, and $\beta>0$ satisfy

\begin{equation}
\beta(1-\alpha)\ge 1.
\label{eq6}
\end{equation}

Then the equilibrium $x=0$ is time-barrier predefined-time stable in the sense of Definition \ref{def1}.
\end{theorem}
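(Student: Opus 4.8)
The plan is to reduce inequality \eqref{eq5} to a \emph{linear} nonautonomous differential inequality through the change of variable $y(t):=v(t)^{1-\alpha}$, where $v(t):=V(x(t),t)$ is the Lyapunov function evaluated along the fixed solution. Since $V$ is locally Lipschitz and $x(\cdot)$ is absolutely continuous on $[0,T_c)$, $v$ is locally Lipschitz, hence differentiable almost everywhere, and \eqref{eq5} may be read in the almost-everywhere (equivalently, upper Dini) sense. Because the right-hand side of \eqref{eq5} is nonpositive for $v\ge 0$, $v$ is nonincreasing, so $0\le v(t)\le v(0)$ on $[0,T_c)$. If $v(0)=0$ then $x_0=0$ by positive definiteness and the solution is identically zero, so the claim is trivial; otherwise I would work on the maximal subinterval $[0,t^{*})$ on which $v(t)>0$.

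On $[0,t^{*})$ the function $y=v^{1-\alpha}$ is locally Lipschitz, and multiplying \eqref{eq5} by $(1-\alpha)v^{-\alpha}>0$ gives, for a.e.\ $t\in[0,t^{*})$,
\begin{equation}
\dot y(t)\;\le\;-\,\frac{\beta(1-\alpha)}{T_c-t}\,y(t)\;-\;q(1-\alpha),
\label{eqplanlin}
\end{equation}
a linear-in-$y$ inequality whose barrier coefficient is exactly $a:=\beta(1-\alpha)\ge 1$ by \eqref{eq6}. Multiplying by the integrating factor $\mu(t):=\bigl(T_c/(T_c-t)\bigr)^{a}$, which satisfies $\dot\mu=\tfrac{a}{T_c-t}\mu$, collapses \eqref{eqplanlin} to $\tfrac{d}{dt}\bigl(\mu(t)y(t)\bigr)\le -q(1-\alpha)\,\mu(t)$, and integration from $0$ yields
\begin{equation}
0\;\le\;\mu(t)\,y(t)\;\le\;y(0)\;-\;q(1-\alpha)\!\int_0^{t}\!\mu(s)\,ds
\qquad\text{on }[0,t^{*}).
\label{eqplanint}
\end{equation}

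The decisive point is that $\int_0^{t}\mu(s)\,ds\to+\infty$ as $t\to T_c^{-}$ precisely because $a\ge1$: the factor $(T_c-s)^{-a}$ is non-integrable at $s=T_c$ exactly in this regime (logarithmic divergence when $a=1$, power divergence when $a>1$). Hence the right-hand side of \eqref{eqplanint} would turn negative before $t$ reaches $T_c$, which is incompatible with $\mu y\ge 0$ unless $t^{*}<T_c$ and $y(t^{*})=v(t^{*})=0$. By positive definiteness $x(t^{*})=0$, and since the origin is an equilibrium the solution stays at $0$ on $[t^{*},T_c)$; therefore $\lim_{t\to T_c^{-}}x(t;x_0)=0$, i.e., the origin is time-barrier predefined-time stable. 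I expect the main obstacle to be the low-regularity bookkeeping — making the passage $y=v^{1-\alpha}$ rigorous near the non-Lipschitz point $v=0$, reading \eqref{eq5} along Filippov solutions, and arguing that the trajectory cannot leave the origin once it arrives — rather than the estimate itself, whose core is the elementary divergence of $\int^{T_c}(T_c-s)^{-a}\,ds$ forced by \eqref{eq6}. An alternative is to route the argument through the scalar comparison lemma applied to $\dot w=-\tfrac{\beta}{T_c-t}w-qw^{\alpha}$, $w(0)=v(0)$, which sidesteps the Dini-derivative discussion at the price of invoking uniqueness of the nonnegative comparison solution.
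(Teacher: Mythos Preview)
Your argument is correct and coincides with the paper's proof up to the order of the two transformations: the paper first rescales to $W=V/(T_c-t)^{\beta}$ and then passes to $W^{1-\alpha}$, whereas you first set $y=v^{1-\alpha}$ and then apply the integrating factor $\mu=(T_c/(T_c-t))^{\beta(1-\alpha)}$; since $\mu\,y=T_c^{\beta(1-\alpha)}W^{1-\alpha}$, the two routes track the same quantity, and the crux in both is the divergence of $\int_0^{T_c}(T_c-s)^{-\beta(1-\alpha)}\,ds$ forced by \eqref{eq6}. Your handling of the regularity bookkeeping (a.e.\ differentiability, the passage through $v=0$, and persistence at the origin once reached) is in fact more careful than the paper's own proof.
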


\begin{proof}
Define the time-barrier Lyapunov transformation

\begin{equation}
W(t)=\frac{V(x(t),t)}{(T_c-t)^{\beta}}, \qquad t\in[0,T_c).
\label{eq7}
\end{equation}

Since $V\ge0$ and $T_c-t>0$, it follows that $W(t)\ge0$. Differentiating \eqref{eq7} along trajectories of \eqref{eq1} yields

\begin{equation}
\dot W(t)=\frac{\dot V(x(t),t)}{(T_c-t)^{\beta}}+\beta\frac{V(x(t),t)}{(T_c-t)^{\beta+1}}.
\end{equation}

Substituting \eqref{eq5} gives

\begin{equation}
\dot W(t)\le -q W(t)^{\alpha}(T_c-t)^{-\beta(1-\alpha)}.
\end{equation}

For $\alpha\in(0,1)$,

\begin{equation}
\frac{d}{dt}W(t)^{1-\alpha}\le -q(1-\alpha)(T_c-t)^{-\beta(1-\alpha)}.
\end{equation}

Integrating over $[0,t)$ and using condition \eqref{eq6}, the integral diverges as $t\to T_c^-$. Consequently, there exists $\tau\le T_c$ such that $W(\tau)=0$, which implies $V(x(\tau),\tau)=0$ and therefore

\begin{equation}
\lim_{t\to T_c^-}x(t)=0.
\end{equation}

\end{proof}

\begin{remark}
The time-barrier predefined-time stability condition in Theorem \ref{thm1} enforces convergence through intrinsic temporal infeasibility rather than state acceleration. The predefined time $T_c$ acts as a hard convergence deadline that is independent of the initial condition. Moreover, $T_c$ is not tightly coupled to the gain parameters $q$ and $\beta$, except for the minimal structural requirement \eqref{eq6}. As a result, predefined-time convergence is achieved without the gain–time interdependencies characteristic of autonomous predefined-time stability designs.
\end{remark}

\subsection{Separation from Autonomous Predefined-Time Stability}

\begin{theorem}[Non-Equivalence to Autonomous Predefined-Time Stability]

\label{thm2}
There exist time-barrier predefined-time stable systems satisfying \eqref{eq5} for which no autonomous Lyapunov inequality of the form

\begin{equation}
\dot V(x)\le -\Phi(V(x))
\label{eq12}
\end{equation}
can guarantee predefined-time convergence via a uniformly bounded integral condition

\begin{equation}
\int_0^{V_0}\frac{dV}{\Phi(V)}\le T_c.
\label{eq13}
\end{equation}
\end{theorem}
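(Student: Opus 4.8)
The plan is to produce an explicit scalar counterexample and to isolate the one feature of the barrier term that cannot be matched autonomously: at the initial instant $t=0$ the term $-\beta V/(T_c-t)$ collapses to the merely \emph{linear} decay $-\beta V/T_c$, which on its own is far too weak to force convergence in finite time, let alone before $T_c$. I will show that any autonomous bound $\Phi$ compatible with such a system is therefore pinned below a linear-plus-$V^{\alpha}$ function, whose settling-time integral diverges as the initial value grows, so that the uniform bound \eqref{eq13} is impossible.

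Concretely, I would fix $T_c>0$, $\alpha\in(0,1)$, $q>0$ and $\beta\ge 1/(1-\alpha)$, and take the scalar system
\begin{equation}
\dot x \;=\; -\operatorname{sgn}(x)\!\left(\beta\,\frac{|x|}{T_c-t}+q\,|x|^{\alpha}\right),\qquad x(0)=x_0,
\end{equation}
with $V(x)=|x|$. Along its (Filippov) solutions $\dot V=-\beta V/(T_c-t)-qV^{\alpha}$, so \eqref{eq5} holds with equality and \eqref{eq6} holds; one checks that solutions exist on $[0,T_c)$ (indeed $V$ is driven to zero no later than $T_c$ and remains there, since the origin is an equilibrium), so Theorem~\ref{thm1} applies and the origin is time-barrier predefined-time stable. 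Moreover every $V_0>0$ equals $V(x_0)$ for a suitable $x_0$. This settles the existence half of the statement.

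For the impossibility half I would argue by contradiction: suppose some $\Phi$ makes the autonomous inequality \eqref{eq12} hold along every trajectory of this system for all $t\in[0,T_c)$. Evaluating \eqref{eq12} at $t=0$ at the state with $V=V_0$ gives $-\Phi(V_0)\ge -\beta V_0/T_c-qV_0^{\alpha}$, i.e.
\begin{equation}
\Phi(V_0)\;\le\;\frac{\beta}{T_c}\,V_0+q\,V_0^{\alpha}\qquad\text{for all }V_0>0 .
\end{equation}
Hence $\int_0^{V_0} dV/\Phi(V)\ge \int_0^{V_0} dV/\big(\tfrac{\beta}{T_c}V+qV^{\alpha}\big)$, and the latter integral is finite near $V=0$ (the integrand is $\sim q^{-1}V^{-\alpha}$ with $\alpha<1$) but grows like $(T_c/\beta)\log V_0$ as $V_0\to\infty$. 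Thus $\int_0^{V_0} dV/\Phi(V)\to\infty$, so no such $\Phi$ can satisfy the uniform bound \eqref{eq13}; since $\Phi$ was arbitrary, the claim follows.

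The step requiring the most care, and the real obstacle, is pinning down what it means for an \emph{autonomous} inequality to ``hold for'' a nonautonomous system: \eqref{eq12} must be read as valid at every point visited by a trajectory — in particular at $t=0$ — and the admissible initial data must exhaust $\mathbb{R}_{>0}$ in the $V$-variable, so that the pointwise bound on $\Phi$ propagates to \emph{all} $V_0$. The scalar construction above secures both. It is worth noting that the contradiction hinges solely on the $t=0$ residue $\beta V/T_c$ of the barrier term — a genuinely nonautonomous artefact with no autonomous counterpart — so the obstruction is structural, and the same argument applies verbatim to any system satisfying \eqref{eq5} whose decay at $t=0$ is at most linear in $V$ for large $V$.
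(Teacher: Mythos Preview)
Your argument is correct and substantially more concrete than the paper's. The paper's proof is a two-sentence heuristic: it observes that convergence in the time-barrier framework is driven by the divergence of the purely temporal integral $\int_0^{T_c}(T_c-t)^{-\beta(1-\alpha)}\,dt$ and simply asserts that no state-only $\Phi$ can reproduce this while keeping \eqref{eq13} uniformly bounded; no explicit system is exhibited and no quantitative estimate is supplied. You take a different route: you construct the scalar system explicitly (in fact the same one the paper later uses as its illustrative example), freeze it at $t=0$ to obtain the sharp pointwise bound $\Phi(V)\le \beta V/T_c+qV^{\alpha}$, and then show that this forces $\int_0^{V_0}dV/\Phi(V)$ to diverge logarithmically in $V_0$. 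Your approach is more rigorous and isolates the obstruction precisely --- the barrier term degenerates to a merely linear decay at the initial instant, which is too weak for fixed-time behaviour --- whereas the paper's version gestures at the mechanism but leaves the reader to supply exactly this estimate. One minor point worth tightening: your bound on $\Phi$ is derived for the particular choice $V=|x|$; if the statement is read as excluding \emph{every} Lyapunov function, the same $t=0$ freezing still closes the argument (the frozen autonomous vector field $f(\cdot,0)$ has unbounded settling time, so no pair $(V,\Phi)$ can certify predefined-time convergence), and it would be worth stating this explicitly.
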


\begin{proof}
In the time-barrier framework, convergence is enforced by the divergence of the time-dependent integral

\begin{equation}
\int_0^{T_c}(T_c-t)^{-\beta(1-\alpha)}dt,
\end{equation}

which depends explicitly on time and is independent of the state magnitude. No autonomous function $\Phi(V)$ can reproduce this divergence uniformly for all initial values $V_0$ while satisfying the boundedness requirement \eqref{eq13}. Hence, autonomous predefined-time stability inequalities cannot enforce the hard deadline guaranteed by \eqref{eq5}, establishing non-equivalence.
\end{proof}

\section{Illustrative Example}

To illustrate the time-barrier predefined-time stability mechanism, consider the scalar time-varying system

\begin{equation}
\dot{x}(t) = -\beta \frac{x(t)}{T_c-t} - q|x(t)|^{\alpha}\operatorname{sign}(x(t)),
\label{eq15}
\end{equation}
where $x(t)\in\mathbb{R}$, $q>0$, $\alpha\in(0,1)$, $\beta>0$, and $T_c>0$ is a predefined-time. The right-hand side of \eqref{eq15} is well defined for all $t\in[0,T_c)$.

Choose the Lyapunov function

\begin{equation}
V(x(t)) = |x(t)|.
\label{eq16}
\end{equation}

Interpreting the dynamics in the Filippov sense, the time derivative of $V$ along trajectories of \eqref{eq15} satisfies

\begin{equation}
\dot V(t) \le -\beta \frac{V(t)}{T_c-t} - q V(t)^{\alpha},
\label{eq17}
\end{equation}

which is a particular instance of the time-barrier dissipation condition \eqref{eq5}. Moreover, the structural requirement

\begin{equation}
\beta(1-\alpha)\ge 1
\label{eq18}
\end{equation}

ensures that the assumptions of Theorem \ref{thm1} are satisfied.

It follows that the equilibrium $x=0$ of system \eqref{eq15} is time-barrier predefined-time stable, and the solution satisfies

\begin{equation}
\lim_{t\to T_c^-} x(t)=0,
\label{eq19}
\end{equation}

for all initial conditions $x(0)=x_0\in\mathbb{R}$. Convergence is enforced by the divergence of the time-dependent coefficient $(T_c-t)^{-1}$ as the predefined time is approached, rather than by increasing the state-dependent decay rate. Consequently, convergence before $T_c$ is guaranteed independently of the initial condition magnitude.

\section{Conclusion}

This paper introduced a time-barrier predefined-time stability concept that enforces convergence through an intrinsic restriction on the remaining available time rather than through state-dependent decay shaping or trajectory matching. A sufficient nonautonomous Lyapunov condition was established, guaranteeing convergence before a predefined deadline via divergence of a time-dependent barrier. The resulting mechanism imposes a hard convergence deadline that is independent of initial conditions and requires only minimal structural conditions on the Lyapunov dissipation. It was further shown that this behavior cannot be reproduced by classical autonomous predefined-time stability inequalities, thereby constituting a distinct stability notion. By shifting the focus from accelerating state decay to constraining temporal feasibility, the proposed approach provides a concise and transparent mechanism for enforcing predefined-time convergence in nonlinear systems.

\bibliographystyle{plainnat} 
\bibliography{refs}






\end{document}